\numberwithin{equation}{section}
\newtheorem{thm}{\bf Theorem}[section]
\newtheorem{lem}[thm]{\bf Lemma}
\theoremstyle{remark}
\newtheorem{rem}{\bf Remark}[section]
\title{A note on stability of the vertical uniform rotations \\
of the heavy top}
\author{Dan Com\u anescu\\
{\small Department of Mathematics, West University of Timi\c soara}\\
{\small Bd. V. P\^ arvan, No 4, 300223 Timi\c soara, Rom\^ ania}\\
{\small E-mail addresses: comanescu@math.uvt.ro}}
\date{}
\begin{document}

\maketitle


\begin{abstract}
We prove that the stability problem of a vertical uniform rotation of a heavy top is completely solved by using the linearization method and the conserved quantities of the differential system which describe the rotation of the heavy top.
\end{abstract}

\noindent {\bf MSC 2010}: 34D20, 37B25, 70E50, 70H14.

\noindent \textbf{Keywords:} rigid body, stability.

\section{Introduction}

A classical problem in mechanics is the study of the rotation of a heavy rigid body with a fixed point. An important special case is the case of symmetric top or Lagrangian top, see \cite{arnold}, for which the inertia ellipsoid at fixed point is an ellipsoid of revolution and whose center of gravity lies on the axis of symmetry. The rotation of the heavy top is governed by the differential system
\begin{equation}\label{top}
\left\{%
\begin{array}{ll}
\dot{\vec{M}}=\vec{M}\times \mathbb{I}^{-1}\vec{M} +mg\vec{\gamma}\times\vec{r}_G \\
\dot{\vec{\gamma}}=\vec{\gamma}\times\mathbb{I}^{-1}\vec{M},\end{array}%
\right.
\end{equation}
where $m$ is the mass of the symmetric top, $g$ is the gravitational acceleration, $\vec{r}_G$ is the vector with the initial point in the fixed point $O$ and the terminal point in the center of gravity $G$, $\mathbb{I}$ is the moment of inertia tensor at the point $O$, $\vec{M}$ is the angular momentum vector and $\vec{\gamma}$ is the unit vector of the direction of the gravitational field. Also, one can use the equivalent description with the state parameters $\vec{\omega}$ and $\vec{\gamma}$, where $\vec{\omega}$ is the angular velocity vector and $\vec{M}=\mathbb{I}\vec{\omega}$. We denote by $A=B$ and $C$ the principal moments of inertia. For the following considerations we use a body frame for which the axes are principal axes of inertia and $G$ has the components $(0,0,z)$ with $z>0$. The matrix of the moment of inertia tensor in this body frame has the form $\mathbb{I}=\hbox{diag}(A,A,C)$. In the above frame the angular momentum vector $\vec{M}$ has the components $M_1, M_2, M_3$ and the unit vector of the direction of the gravitational field $\vec{\gamma}$ has the components $\gamma_1,\gamma_2,\gamma_3$.
We have four conserved quantities:
$$H=\frac{1}{2}(\frac{M_1^2}{A}+\frac{M_2^2}{A}+\frac{M_3^2}{C})+mgz\gamma_3,\,\,\,C_1=\gamma_1^2+\gamma_2^2+\gamma_3^2,\,\,\,C_2=M_1\gamma_1+M_2\gamma_2+M_3\gamma_3,
\,\,\,\text{and}\,\,\,F=M_3.$$
It is easy to see that a vertical uniform rotation $(0,0,\mathfrak{M}_{3},0,0,1)$ of the top is an equilibrium point for the system \eqref{top}.

It is well known, see \cite{chetaev}, \cite{holm-marsden-ratiu-weinstein}, \cite{marsden-ratiu} and \cite{rouche}, that the condition $C^2\omega^2>4Amgz$ is a sufficient condition for stability of the equilibrium point $(0,0,\omega,0,0,1)$ when we use the state parameters $\vec{\omega}$ and $\vec{\gamma}$. This condition implies the following sufficient condition for the stability of the vertical uniform rotation $(0,0,\mathfrak{M}_{3},0,0,1)$,
\begin{equation}\label{condition}
\mathfrak{M}_{3}^2>4Amgz.
\end{equation}
The method used by N.G. Chetaev (see \cite{chetaev}) and presented in the paper \cite{rouche} construct a Lyapunov function of the form $\lambda_1 H+\lambda_2 C_1+\lambda_3 C_2+\lambda_4 F+\mu F^2$. In the papers \cite{holm-marsden-ratiu-weinstein} and \cite{marsden-ratiu} is used the energy-Casimir method which also  construct a Lyapunov function by using the conserved quantities $H,C_1,C_2$ and $F$.

In this paper we study the possibility to construct a Lyapunov function using the conserved quantities $H,C_1,C_2$ and $F$. We apply an algebraic method also used in the papers \cite{comanescu} and \cite{comanescu-1}. We prove that it is possible to construct in a neighborhood of the vertical uniform rotation $(0,0,\mathfrak{M}_{3},0,0,1)$ a Lyapunov function using the conserved quantities $H,C_1,C_2,F$ if and only if we have $\mathfrak{M}_{3}^2\geq 4Amgz$.
We recover the sufficient condition \eqref{condition} for the Lyapunov stability of the vertical uniform rotation. We prove that the condition
\begin{equation}\label{condition-equality}
\mathfrak{M}_{3}^2=4Amgz
\end{equation}
is also a sufficient condition for the Lyapunov stability.

In the papers \cite{holm-marsden-ratiu-weinstein} and \cite{marsden-ratiu} is noted that the condition $\mathfrak{M}_{3}^2<4Amgz$ implies the instability of the uniform rotation $(0,0,\mathfrak{M}_{3},0,0,1)$; more precisely the uniform rotation is not spectrally stable (the linearization has an eigenvalue with strictly positive real part).

The stability problem of a vertical uniform rotation of a heavy top is completely solved by using the conserved quantities $H,C_1,C_2,F$ and the linearization method.

\section{Stability of the vertical uniform rotations}

The stability of an equilibrium point with respect to a set of conserved quantities is a sufficient condition for Lyapunov stability. If an equilibrium point is not stable with respect to a set of conserved quantities, then we cannot construct a Lyapunov function using this set of conserved quantities. We remind some theoretical considerations, from the paper \cite{comanescu}. We consider an open set $D\subset\mathbb{R}^n$ and the locally Lipschitz function $f:D\rightarrow \mathbb{R}^n$ which generates the differential equation
\begin{equation}\label{ecuatie-diferentiala}
    \dot{x}=f(x)
\end{equation}
Let $x_e$ be an equilibrium point. A continuous function $V:D\rightarrow \mathbb{R}$ which satisfies $V(x_e)=0$ and $V(x)>0$ for every $x$ in a neighborhood of $x_e$ and $x\neq x_e$ is called a positive definite function at the equilibrium point $x_e$.
{\it The equilibrium point $x_e$ of \eqref{ecuatie-diferentiala} is stable with respect to the set of conserved quantities $\{F_1,...,F_k\}$ if there exists a continuous function $\Phi:\mathbb{R}^k\rightarrow \mathbb{R}$ such that $x\rightarrow \Phi(F_1,....,F_k)(x)-\Phi(F_1,....,F_k)(x_e)$ is a positive definite function at $x_e$.}
In the conditions of the above definition the function $x\rightarrow \Phi(F_1,....,F_k)(x)-\Phi(F_1,....,F_k)(x_e)$ is a Lyapunov function at the equilibrium point $x_e$ and we have the following results.
\begin{thm}\label{implication-stability}
If the equilibrium point $x_e$ of \eqref{ecuatie-diferentiala} is stable with respect to the set of conserved quantities $\{F_1,...,F_k\}$ then it is stable in the sense of Lyapunov.
\end{thm}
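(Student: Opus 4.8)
The plan is to run the classical Lyapunov argument, taking as Lyapunov function the positive definite function $V(x)=\Phi(F_1,\dots,F_k)(x)-\Phi(F_1,\dots,F_k)(x_e)$ furnished by the definition of stability with respect to $\{F_1,\dots,F_k\}$. The three properties I will exploit are: $V$ is continuous; $V(x_e)=0$ and $V(x)>0$ on some punctured neighbourhood $U$ of $x_e$; and $V$ is constant along every solution of \eqref{ecuatie-diferentiala}. Only the last point needs a remark, and it is immediate: since each $F_i$ is a conserved quantity, $t\mapsto F_i(x(t))$ is constant on every solution, hence so is $t\mapsto\Phi\big(F_1(x(t)),\dots,F_k(x(t))\big)$, and therefore $t\mapsto V(x(t))$ is constant. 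Note that this uses no differentiability of $\Phi$ or of the $F_i$, only that the $F_i$ are first integrals in the set-theoretic sense, exactly as assumed; so the hypotheses are used at full strength and nothing more.

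Next I would fix $\varepsilon_0>0$ small enough that $\overline{B}(x_e,\varepsilon_0)\subset D$ and $\overline{B}(x_e,\varepsilon_0)\setminus\{x_e\}\subset U$, and let $\varepsilon\in(0,\varepsilon_0]$ be arbitrary. On the sphere $S_\varepsilon=\{x:\|x-x_e\|=\varepsilon\}$, which is compact and on which $V$ is continuous and strictly positive, the quantity $m_\varepsilon:=\min_{S_\varepsilon}V$ is well defined and $m_\varepsilon>0$. Using the continuity of $V$ at $x_e$ together with $V(x_e)=0$, I then choose $\delta\in(0,\varepsilon)$ such that $V(x)<m_\varepsilon$ whenever $\|x-x_e\|<\delta$.

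Finally I would show that every forward solution $x(\cdot)$ with $\|x(0)-x_e\|<\delta$ satisfies $\|x(t)-x_e\|<\varepsilon$ for all $t$ in its maximal interval of existence. If this failed, let $t_1$ be the first time with $\|x(t_1)-x_e\|=\varepsilon$ (such a $t_1$ exists, is positive, and lies in the maximal interval, by continuity of the solution and the intermediate value theorem, since $x(0)$ starts strictly inside $S_\varepsilon$); then the invariance of $V$ would give $m_\varepsilon\le V(x(t_1))=V(x(0))<m_\varepsilon$, a contradiction. Hence $x(\cdot)$ stays in the compact set $\overline{B}(x_e,\varepsilon)\subset D$, so by the standard continuation theorem for ordinary differential equations its forward maximal interval is $[0,\infty)$; as $\varepsilon\in(0,\varepsilon_0]$ was arbitrary, this is precisely Lyapunov stability of $x_e$. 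I do not expect any genuine obstacle here — this is the abstract form of the Lyapunov / energy--Casimir principle and the argument is routine; the only points deserving care are shrinking $\varepsilon$ so the ball lies in $D$ and in the positivity region $U$, using compactness of $S_\varepsilon$ to secure $m_\varepsilon>0$, and invoking the continuation theorem to pass from ``the solution remains in a compact subset of $D$'' to ``the solution is global in forward time''.
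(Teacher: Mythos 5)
Your proof is correct and is exactly the argument the paper has in mind: the paper offers no proof of this theorem, merely remarking that $x\mapsto \Phi(F_1,\dots,F_k)(x)-\Phi(F_1,\dots,F_k)(x_e)$ is a Lyapunov function at $x_e$ and deferring to the cited reference, and your write-up simply fills in that standard Lyapunov direct-method argument (positivity on a punctured ball, minimum on the sphere $S_\varepsilon$, invariance of $V$ along solutions, continuation to $[0,\infty)$). No gaps.
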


\begin{thm}\label{stability}
Let $x_e$ be an equilibrium point of \eqref{ecuatie-diferentiala} and $\{F_1,...,F_k\}$ a set of conserved quantities. The following statements are equivalent:
\begin{itemize}
\item[(i)] $x_e$ is stable with respect to the set of conserved quantities $\{F_1,...,F_k\}$;
\item [(ii)] $x\rightarrow ||(F_1,...,F_k)(x)-(F_1,...,F_k)(x_e)||$ is a positive definite function at $x_e$;
\item [(iii)] the system of equations $F_1(x)=F_1(x_e),...,F_k(x)=F_k(x_e)$ has no root besides $x_e$ in some neighborhood of $x_e$.
\end{itemize}
\end{thm}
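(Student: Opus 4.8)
The plan is to prove the equivalence of (i), (ii), (iii) by the cycle (i) $\Rightarrow$ (iii) $\Rightarrow$ (ii) $\Rightarrow$ (i), using only that the conserved quantities $F_1,\dots,F_k$ are continuous, so that $F:=(F_1,\dots,F_k):D\to\mathbb R^k$ is continuous, together with the fact that positive definiteness at $x_e$ is, by the definition just given, a purely local notion: a continuous $V$ with $V(x_e)=0$ and $V(x)>0$ for $x\neq x_e$ in \emph{some} neighborhood of $x_e$.

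For (i) $\Rightarrow$ (iii) I would take the continuous $\Phi:\mathbb R^k\to\mathbb R$ supplied by (i) and fix a neighborhood $U$ of $x_e$ on which $V(x):=\Phi(F(x))-\Phi(F(x_e))$ is strictly positive off $x_e$. If some $x^{\ast}\in U$ satisfied $F_j(x^{\ast})=F_j(x_e)$ for all $j$, then $F(x^{\ast})=F(x_e)$ in $\mathbb R^k$, hence $V(x^{\ast})=\Phi(F(x^{\ast}))-\Phi(F(x_e))=0$; positivity forces $x^{\ast}=x_e$, so $U$ witnesses (iii). For (iii) $\Rightarrow$ (ii), let $U$ be a neighborhood in which $x_e$ is the only solution of the level-set system; then $g(x):=\|F(x)-F(x_e)\|$ is continuous, satisfies $g(x_e)=0$, is everywhere nonnegative, and cannot vanish on $U\setminus\{x_e\}$ (a zero of $g$ there would be a forbidden root), so $g$ is positive definite at $x_e$. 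For (ii) $\Rightarrow$ (i) I would simply put $\Phi(y):=\|y-F(x_e)\|$, which is continuous on $\mathbb R^k$ and gives $\Phi(F(x))-\Phi(F(x_e))=g(x)$, the function already known to be positive definite.

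There is no genuine analytic obstacle here; the one thing requiring care is bookkeeping of neighborhoods. None of (i)--(iii) asserts a global inequality, so in each implication I must carry along precisely the neighborhood furnished by the hypothesis and transport it to the conclusion, rather than expecting a statement valid on all of $D$. The only standing assumption being used is the (mild) continuity of the conserved quantities, which is what makes $g$ and $\Phi\circ F$ continuous. Finally I would note that combining the equivalence with Theorem \ref{implication-stability} shows that any one of (i), (ii), (iii) already implies Lyapunov stability of $x_e$, which is the form in which the criterion will be applied to the vertical uniform rotation $(0,0,\mathfrak M_3,0,0,1)$ of the heavy top.
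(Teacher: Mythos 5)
Your cyclic argument (i) $\Rightarrow$ (iii) $\Rightarrow$ (ii) $\Rightarrow$ (i) is correct and complete; the paper itself gives no proof of this theorem (it is quoted from the reference \cite{comanescu}), and your argument --- taking $\Phi(y)=\|y-F(x_e)\|$ for (ii) $\Rightarrow$ (i) and tracking the neighborhoods carefully in the other two implications --- is exactly the standard one, using only continuity of the conserved quantities.
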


Theorem \ref{stability} $(iii)$ offer an algebraic method to prove the Lyapunov stability of an equilibrium point. This method have been used in \cite{comanescu} and \cite{comanescu-1} for studying the stability problem of the uniform rotations of a torque-free gyrostat and also for studying the stability problem of the equilibrium states of a heavy gyrostat (Zhukovskii case).

In our case the algebraic system at the equilibrium point $(0,0,\mathfrak{M}_{3},0,0,1)$ is
\begin{equation}\label{algebraic}
    \frac{1}{2}(\frac{M_1^2}{A}+\frac{M_2^2}{A}+\frac{M_3^2}{C})+mgz\gamma_3=\frac{\mathfrak{M}_3^2}{2C}+mgz,\,\,\gamma_1^2+\gamma_2^2+\gamma_3^2=1,
    \,\,M_1\gamma_1+M_2\gamma_2+M_3\gamma_3=\mathfrak{M}_3,\,\,M_3=\mathfrak{M}_3.
\end{equation}
This system is equivalent with the following system
\begin{equation}\label{algebraic-1}
    M_1^2+M_2^2-2Amgz(1-\gamma_3)=0,\,\,\gamma_1^2+\gamma_2^2+\gamma_3^2=1,
    \,\,M_1\gamma_1+M_2\gamma_2-\mathfrak{M}_3(1-\gamma_3)=0.
\end{equation}
We introduce $u,\varphi,v,\theta$ which satisfies
$M_1=u\cos\varphi,\,\,M_2=u\sin\varphi,\,\,\gamma_1=v\cos\theta,\,\,\gamma_2=v\sin\theta$.
The algebraic system for $u,\varphi,v,\theta$ and $\gamma_3$ is
\begin{equation}\label{algebraic-2}
   u^2=2Amgz(1-\gamma_3),\,\,v^2=1-\gamma_3^2,
    \,\,uv\cos(\theta-\varphi)=\mathfrak{M}_3(1-\gamma_3).
\end{equation}

\begin{lem}\label{lema}
The solution $(0,0,\mathfrak{M}_{3},0,0,1)$ of the system \eqref{algebraic} is isolated in the set of the solutions if and only if
$\mathfrak{M}_{3}^2\geq 4Amgz$.
\end{lem}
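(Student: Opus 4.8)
The plan is to analyse the reduced algebraic system \eqref{algebraic-2} in a neighbourhood of the equilibrium, where $\gamma_3$ is close to $1$, and to use $s:=1-\gamma_3$ as the single driving parameter. Because $v^2=1-\gamma_3^2\ge 0$ forces $|\gamma_3|\le 1$, near the equilibrium we have $s\ge 0$ and $s$ small. First I would dispose of the case $s=0$: then the first two equations of \eqref{algebraic-2} give $u=v=0$, hence $M_1=M_2=\gamma_1=\gamma_2=0$ and $\gamma_3=1$, so the only solution of \eqref{algebraic} with $\gamma_3=1$ near the equilibrium is the equilibrium itself. Thus the equilibrium fails to be isolated precisely when \eqref{algebraic-2} is solvable for arbitrarily small $s>0$.

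For $s>0$ I would solve the first two equations of \eqref{algebraic-2} by $u=\sqrt{2Amgz\,s}>0$ and $v=\sqrt{s(2-s)}>0$ (any signs are absorbed into $\varphi,\theta$), and substitute into the third equation. The common factor $s$ cancels and one is left with the single scalar condition
\[
\cos(\theta-\varphi)=\frac{\mathfrak{M}_3}{\sqrt{2Amgz(2-s)}}.
\]
For a fixed small $s>0$ this has a solution $\theta-\varphi$ if and only if the right-hand side lies in $[-1,1]$, i.e. if and only if $\mathfrak{M}_3^2\le 2Amgz(2-s)=4Amgz-2Amgz\,s$.

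It then remains to read off the two implications. If $\mathfrak{M}_3^2\ge 4Amgz$, then for every $s>0$ we have $\mathfrak{M}_3^2\ge 4Amgz>4Amgz-2Amgz\,s$, so \eqref{algebraic-2} has no solution with $s>0$ and, together with the case $s=0$ above, the equilibrium is isolated. If instead $\mathfrak{M}_3^2<4Amgz$, then for every $s$ with $0<s<(4Amgz-\mathfrak{M}_3^2)/(2Amgz)$ the displayed equation is solvable, and since $u>0$ the resulting points have $(M_1,M_2)\ne(0,0)$ yet tend to $(0,0,\mathfrak{M}_3,0,0,1)$ as $s\to 0^+$; hence the equilibrium is not isolated.

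The routine verifications are the equivalences \eqref{algebraic}$\Leftrightarrow$\eqref{algebraic-1}$\Leftrightarrow$\eqref{algebraic-2} (obtained by substituting $M_3=\mathfrak{M}_3$ and passing to polar coordinates in the $(M_1,M_2)$- and $(\gamma_1,\gamma_2)$-planes) and the fact that this substitution neither loses nor creates solutions near the equilibrium. The only point requiring a little care is the borderline case $\mathfrak{M}_3^2=4Amgz$: there the right-hand side of the displayed equation equals $1$ exactly at $s=0$, but for every $s>0$ one has $\sqrt{2Amgz(2-s)}<\sqrt{4Amgz}=|\mathfrak{M}_3|$, so the right-hand side is strictly larger than $1$ and no solution exists — which is exactly why the inequality in the statement is non-strict. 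I expect this boundary analysis to be the main (and essentially the only) subtlety.
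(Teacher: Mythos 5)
Your proposal is correct and follows essentially the same route as the paper: after reducing to the polar form \eqref{algebraic-2}, both arguments isolate the scalar condition $\cos(\theta-\varphi)=\mathfrak{M}_3/\sqrt{2Amgz(1+\gamma_3)}$ (your $2-s$ is the paper's $1+\gamma_3$) and observe that it is unsolvable for $\gamma_3$ near $1$ exactly when $\mathfrak{M}_3^2\geq 4Amgz$, while producing a sequence of nontrivial solutions converging to the equilibrium otherwise. Your handling of the borderline case and of the strict inequality for $s>0$ matches the paper's $\sqrt{1+\gamma_3}\,|\cos(\varphi-\theta)|<\sqrt{2}$ step.
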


\begin{proof}
Let $(M_1,M_2,\mathfrak{M}_3,\gamma_1,\gamma_2,\gamma_3)$ be a solution of \eqref{algebraic} in a ball with the center in $(0,0,\mathfrak{M}_{3},0,0,1)$ and a radius $R<1$, then $0<\gamma_3\leq 1$ (see \eqref{algebraic-2}). If $\gamma_3=1$ then we have that $u=v=0$ and consequently the solution is $(0,0,\mathfrak{M}_{3},0,0,1)$. If $0<\gamma_3<1$ then, by using \eqref{algebraic-2}, we have
$$\frac{|\mathfrak{M}_3|}{\sqrt{2Amgz}}=\sqrt{1+\gamma_3}\cdot |\cos(\varphi-\theta)|<\sqrt{2}.$$
We deduce that a necessary condition for a solution of \eqref{algebraic}, except $(0,0,\mathfrak{M}_{3},0,0,1)$, to be situated in the ball  with the center in $(0,0,\mathfrak{M}_{3},0,0,1)$ and a radius $R<1$ is $\mathfrak{M}_{3}^2< 4Amgz$. Consequently, if $\mathfrak{M}_{3}^2\geq 4Amgz$, then $(0,0,\mathfrak{M}_{3},0,0,1)$ is isolated in the set of the solutions of \eqref{algebraic}.

We suppose that $\mathfrak{M}_{3}^2< 4Amgz$ and consider a sequence $(\gamma_{3})_n$ which satisfy the conditions: $0<(\gamma_{3})_n<1$ and $(\gamma_{3})_n\rightarrow_{n\rightarrow\infty}1$. There exists the sequences $(\varphi_n)$ and $(\theta_n)$ such that $\sqrt{1+(\gamma_{3})_n}\cos(\varphi_n-\theta_n)=\frac{\mathfrak{M}_3}{\sqrt{2Amgz}}$. We obtain a sequence $(u_n,\varphi_n,v_n,\theta_n,(\gamma_{3})_n)$ of solutions of \eqref{algebraic-2} with $0<u_n \rightarrow_{n\rightarrow\infty}0$ and $0<v_n \rightarrow_{n\rightarrow\infty}0$.

Consequently, we obtain a nonconstant sequence $((M_{1})_n,(M_{2})_n,\mathfrak{M}_{3},(\gamma_{1})_n,(\gamma_{2})_n,(\gamma_{3})_n)$ of solutions of \eqref{algebraic} which converge to $(0,0,\mathfrak{M}_{3},0,0,1)$. We deduce that the solution $(0,0,\mathfrak{M}_{3},0,0,1)$ is not isolated in the set of the solutions of \eqref{algebraic}.
\end{proof}

Using Lemma \ref{lema},and Theorem \ref{implication-stability}, and Theorem \ref{stability} and linearization method (see \cite{holm-marsden-ratiu-weinstein} and \cite{marsden-ratiu}) we obtain the following results.

\begin{thm} Let $(0,0,\mathfrak{M}_{3},0,0,1)$ be a vertical uniform rotation of the system \eqref{top}.
\begin{itemize}
\item [(i)] The vertical uniform rotation is stable with respect to the set of conserved quantities $\{H,C_1,C_2,F\}$ if and only if $\mathfrak{M}_{3}^2\geq 4Amgz$.
\item [(ii)] The inequality $\mathfrak{M}_{3}^2\geq 4Amgz$ is a necessary and sufficient condition for the Lyapunov stability of the vertical uniform rotation.

\end{itemize}
\end{thm}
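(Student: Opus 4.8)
The plan is to derive part (i) directly from Lemma \ref{lema} together with the equivalence $(i)\Leftrightarrow(iii)$ in Theorem \ref{stability}, to obtain the sufficiency half of part (ii) from Theorem \ref{implication-stability}, and to obtain the necessity half of part (ii) from the linearization of \eqref{top}. For part (i), observe that writing out $H(x)=H(x_e)$, $C_1(x)=C_1(x_e)$, $C_2(x)=C_2(x_e)$, $F(x)=F(x_e)$ at the vertical uniform rotation $x_e=(0,0,\mathfrak{M}_3,0,0,1)$ is exactly the algebraic system \eqref{algebraic}. Hence, by Theorem \ref{stability} $(i)\Leftrightarrow(iii)$, the vertical uniform rotation is stable with respect to $\{H,C_1,C_2,F\}$ if and only if $x_e$ is isolated among the solutions of \eqref{algebraic}, and Lemma \ref{lema} tells us this holds precisely when $\mathfrak{M}_3^2\geq 4Amgz$. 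This proves (i).

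For the sufficiency half of (ii): if $\mathfrak{M}_3^2\geq 4Amgz$, part (i) gives stability with respect to $\{H,C_1,C_2,F\}$, so by Theorem \ref{implication-stability} the equilibrium is stable in the sense of Lyapunov. (Concretely, $x\mapsto\|(H,C_1,C_2,F)(x)-(H,C_1,C_2,F)(x_e)\|$, or equivalently the Chetaev/energy-Casimir combination $\lambda_1H+\lambda_2C_1+\lambda_3C_2+\lambda_4F+\mu F^2$ for a suitable choice of multipliers, is a Lyapunov function at $x_e$.) In particular this covers the borderline case $\mathfrak{M}_3^2=4Amgz$, which is the new point.

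For the necessity half of (ii), I would assume $\mathfrak{M}_3^2<4Amgz$ and linearize \eqref{top} at $x_e$. Since $F=M_3$ is conserved and $\gamma_3$ enters the remaining equations only through quadratic terms near the equilibrium, the perturbations $m_3$ and $g_3$ of $M_3$ and $\gamma_3$ decouple with zero eigenvalues, and the essential linearization acts on $(M_1,M_2,\gamma_1,\gamma_2)$. Passing to the complex variables $\zeta=M_1+iM_2$, $\eta=\gamma_1+i\gamma_2$ diagonalizes the residual rotational symmetry and reduces the system to $\dot Z = i\,M_0\,Z$ with $Z=(\zeta,\eta)^{\top}$ and $M_0=\bigl(\begin{smallmatrix}-\mathfrak{M}_3(1/C-1/A)&-mgz\\ 1/A&-\mathfrak{M}_3/C\end{smallmatrix}\bigr)$. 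Its eigenvalues are $\lambda=i\mu$ where $\mu$ solves a real quadratic whose discriminant equals $A^{-2}(\mathfrak{M}_3^2-4Amgz)$; when this is negative, the two values of $\mu$ form a complex-conjugate pair with nonzero imaginary part, so one of the corresponding $\lambda$ has strictly positive real part. By the classical instability theorem for an equilibrium whose linearization has an eigenvalue in the open right half-plane (the spectral instability already observed in \cite{holm-marsden-ratiu-weinstein} and \cite{marsden-ratiu}), $x_e$ is not Lyapunov stable, which together with the sufficiency half completes (ii).

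I do not anticipate a real obstacle. The only genuinely new ingredient is Lemma \ref{lema}, and its one delicate point --- the equality case $\mathfrak{M}_3^2=4Amgz$ --- has already been handled. What remains is the routine eigenvalue computation for the linearization and the bookkeeping that justifies discarding the $M_3$ and $\gamma_3$ directions; the one thing to keep in mind is that the zero eigenvalues from those directions (and, more generally, those forced by the conserved quantities) do not interfere with the instability conclusion, since a single eigenvalue with strictly positive real part already makes $x_e$ unstable in $\mathbb{R}^6$.
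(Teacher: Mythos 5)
Your proposal is correct and follows essentially the same route as the paper: part (i) from Lemma \ref{lema} combined with the equivalence $(i)\Leftrightarrow(iii)$ of Theorem \ref{stability}, the sufficiency half of (ii) from Theorem \ref{implication-stability}, and the necessity half from spectral instability of the linearization when $\mathfrak{M}_{3}^2<4Amgz$. The only difference is that you carry out the eigenvalue computation explicitly (and correctly --- the discriminant is indeed $A^{-2}(\mathfrak{M}_{3}^2-4Amgz)$), whereas the paper simply cites \cite{holm-marsden-ratiu-weinstein} and \cite{marsden-ratiu} for that step.
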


\begin{rem}
If we use the angular velocity vector $\vec{\omega}$ and the unit vector of the direction of the gravitational field $\vec{\gamma}$ to describe the rotation of the top, then the necessary and sufficient condition for the Lyapunov stability of the vertical uniform rotation $(0,0,\omega,0,0,1)$ is $C^2\omega^2\geq 4Amgz$.
\end{rem}
\medskip

\noindent {\bf Acknowledgments.} This work was supported by a grant of the Romanian National Authority for
Scientific Research, CNCS UEFISCDI, project number PN-II-RU-TE-2011-3-0006.

\end{document}